\DeclareMathOperator{\poly}{poly}
\DeclareMathOperator{\row}{Row}
\newtheorem{theorem}{Theorem$\!$}
\newtheorem{lemma}[theorem]{Lemma$\!$}
\newtheorem{corollary}[theorem]{Corollary$\!$}
\newtheorem{definition}[theorem]{Definition$\!$}
\newtheorem{condition}[theorem]{Condition$\!$}
\newcommand{\ceil}[1]{\left \lceil #1 \right \rceil }
\newcommand{\floor}[1]{\left \lfloor #1 \right \rfloor }
\newcommand{\cC}{\mathcal{C}}
\newcommand{\bA}{\mathbf{A}}
\newcommand{\bG}{\mathbf{G}}
\newcommand{\bS}{\mathbf{S}}
\newcommand{\bY}{\mathbf{Y}}
\newcommand{\bZ}{\mathbf{Z}}
\newcommand{\mybold}[1]{\bm{#1}}
\newcommand{\bc}{{\mybold{c}}}
\newcommand{\bs}{{\mybold{s}}}
\newcommand{\bu}{{\mybold{u}}}
\newcommand{\bv}{{\mybold{v}}}
\newcommand{\bx}{{\mybold{x}}}
\newcommand{\by}{{\mybold{y}}}
\newcommand{\bz}{{\mybold{z}}}
\begin{document}
%
\title{Linear List Decodable Edit-Correcting Codes with Rate Approaching $1$}
%
%
%

\author{
Yuting~Li\thanks{Yuting Li is with the Department of Computer Science at the University of Virginia, USA, \texttt{mzy8rp@virginia.edu}.},  Ryan~Gabrys, Member,~IEEE\thanks{Ryan Gabrys is with Calit2 at the University of California-San Diego, USA, \texttt{rgabrys@ucsd.edu}.}, Farzad~Farnoud,~Member,~IEEE
\thanks{Farzad Farnoud is with the Department of Electrical and Computer Engineering and the Department of Computer Science at the University of Virginia, USA, \texttt{farzad@virginia.edu}.}
}

\maketitle

\begin{abstract}
Linear codes correcting one deletions have rate at most $1/2$.   
    In this paper, we construct linear list decodable codes correcting edits with rate approaching $1$ and reasonable list size. Our encoder and decoder run in polynomial time.
\end{abstract}


%

\section{Introduction}
Codes correcting edits have been studied intensively.  In \cite{haeupler2017synchronization}, Haeupler et al.\ propose a  technique called `synchronization strings' and  construct codes correcting edits with rate approaching the Singleton bound over large constant alphabet. In \cite{cheng2018deterministic} and \cite{haeupler2019optimal}, Cheng et al.\ and Haeupler respectively construct binary codes correcting $\epsilon$ fraction of edits with rate $1-O(\epsilon\log^2\frac{1}{\epsilon})$.
As for linear codes, in \cite{abdel2007linear}, Abdel-Ghaffar et al.\ show that a linear code correcting one deletion has rate at most $1/2$. In \cite{cheng2022efficient},  Cheng et al.\ establish the \emph{half Singleton bound}, which says that a linear code correcting $\delta$ fraction of edits has rate at most $\frac{1-\delta}{2}$. In \cite{cheng2022efficient}, Cheng et al.\ also construct asymptotically good linear edit-correcting codes. In \cite{con22explicit}, Con et al.\ construct linear codes correcting a fraction of $\delta$ edits with rate $R=\frac{1-4\delta}{8}-\epsilon$ over a field of size $\poly(1/\epsilon)$ and construct binary linear codes correcting a fraction of $\delta$ edits with rate $R=\frac{1-54\delta}{1216}$. In \cite{cheng2023linear}, Cheng et al.\ study linear edit-correcting  codes for high rate and high noise regimes. Specifically, for the high noise regime, for any $\epsilon>0$, they construct linear codes correcting $1-\epsilon$ fraction of edits with rate $\Omega(\epsilon^2)$ over an alphabet of size $\poly(1/\epsilon)$ and construct linear codes correcting $1-\epsilon$ fraction of edits with rate $\Omega(\epsilon^4)$ over an alphabet of size $\poly(1/\epsilon)$ with polynomial time encoder and decoder. For the high rate regime, they construct binary linear codes with rate $1/2-\epsilon$ correcting $\Omega(\epsilon^3\log^{-1} \frac{1}{\epsilon})$  fraction of edits. There are also several results on list decoding against edits. In \cite{Guruswami2017}, Guruswami et al.\ constructed explicit binary codes of rate $\Tilde{\Omega}(\epsilon^3)$ which is list decodable from a $1/2-\epsilon$ fraction of deletions with list size $(1/\epsilon)^{O(\log\log \epsilon)}$ for $0<\epsilon<1/2$. In \cite{Guruswami20optimally},
for any desired $\epsilon>0$ Guruswami et al.\ construct a family of binary codes of positive rate which can be efficiently list-decoded from any combination of $\gamma$ fraction of insertions and $\delta$ fraction of deletions as long as $\gamma+2\delta<1-\epsilon$.

There are also several results on Reed-Solomon codes correcting edits. In \cite{con23reed}, Con et al.\ prove that over fields of size $n^{O(k)}$
there are $[n, k]$ Reed-Solomon codes that
can decode from $n-2k+ 1$ edit errors and hence attain the half-Singleton bound.
They also give a deterministic construction of such codes over much larger fields (of size $n^{k^{
O(k)}}$). In \cite{con24optimal}, Con et al.\  construct $[n,2]$ Reed–Solomon codes
that can correct $n-3$ edit errors with alphabet size $q = O(n^3)$
), which achieves the minimum 
field size needed for such codes.  In \cite{con2024random}, Con et al.\
 prove that with high probability, random Reed-Solomon codes approach
the half-Singleton bound with
linear-sized alphabets. In \cite{beelen2025reed}, Beelen et al.\ show that almost all 2-dimensional Reed-Solomon codes correct at least one 
insertion or deletion. Moreover, for large enough field size $q$, and for any $k\geq 2$, they 
show that there exists a full-length $k$-dimensional Reed-Solomon code that corrects $q/10k$ edits.  They also present a polynomial time algorithm that constructs rate $1/2$ Reed-Solomon codes that can correct a single insertion with alphabet size $q=O(k^4)$.

Recall that linear codes correcting one deletion have rate at most $1/2$. In this paper, we study whether there exist linear list decodable insdel codes with rate approaching $1$. We give an affirmative answer. 
\begin{theorem}[Informal]
    For any $\eta>0$, there exist linear list decodable codes of length $n$ correcting $\eta$ fraction of edits with rate $1-O(\eta^{1/4})$ with 
reasonable list size. The codes have polynomial-time encoders and decoders.
\end{theorem}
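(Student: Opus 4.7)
The plan is to reduce correcting $\eta n$ edits with a linear code to the two simpler tasks of (i) list-identifying the block boundaries of the corrupted word and (ii) correcting a small number of edits inside each block. Concretely, I partition the $n$ coordinates into $m = n/\ell$ blocks of length $\ell = \Theta(\eta^{-1/2})$ and encode each block with a linear edit-correcting code $C_\ell$ from prior work (e.g., \cite{cheng2022efficient} or \cite{con22explicit}), which has constant rate and corrects a constant number of edits per block; amortized over the codeword this contributes a $1 - O(\eta^{1/2})$ multiplicative factor to the rate. In parallel, a second linear map produces an auxiliary ``synchronization'' output of length $O(\eta^{1/4}n)$ whose purpose is to list-identify the original block boundaries in the corrupted word; this second contribution dominates, giving overall rate $1 - O(\eta^{1/4})$. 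Since direct sums and interleavings of linear maps are linear, the overall encoding is linear.

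\textbf{Decoding and list-size analysis.} On receiving $y$, the decoder first list-decodes the synchronization layer to obtain a polynomial-size list of candidate block-boundary partitions of $y$. For each such candidate it parses $y$ into $m$ subwords and runs the polynomial-time decoder of $C_\ell$ on each subword; the final list is the union of messages produced in this way. The running time is polynomial because each inner block has size $O(\eta^{-1/2})$ and the candidate partition list is polynomial, and the final list size is at most the synchronization list size times a constant per-block ambiguity. To establish correctness, I would show that for the true transmitted codeword the correct partition always appears in the synchronization list, so that at least one candidate yields the correct parsing and decodes on every block.

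\textbf{Main obstacle.} The hardest part is engineering the synchronization layer: although it contributes only $O(\eta^{1/4}n)$ symbols to the codeword, an adversary may concentrate all $\eta n$ edits on those symbols, corrupting as much as an $\eta^{3/4}$ fraction of the synchronization layer, a regime far beyond what any classical linear edit-correcting code can handle with unique decoding. My intended fix is to exploit that only list decoding is required: by imposing the synchronization constraints as linear parities spread throughout the codeword and appealing to a Johnson-type or random-linear counting argument for the edit metric, the number of partitions consistent with the parities should be polynomial in $n$ even under an adversarially placed $\eta n$ edits. Making this synchronization scheme simultaneously (i) strictly linear, (ii) of length only $O(\eta^{1/4}n)$, and (iii) list-decodable from $\eta n$ edits to a polynomial-size list of partitions will be the most delicate component of the proof, and I expect to borrow existing high-noise linear edit-correcting constructions from \cite{cheng2023linear} as the inner machinery for this step.
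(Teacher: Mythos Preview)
Your proposal has a fundamental gap in the inner-code step. You encode each length-$\ell$ block by a linear code $C_\ell$ that \emph{uniquely} corrects a constant number of edits and claim this contributes a $1-O(\eta^{1/2})$ factor to the rate. But this is exactly what the half-Singleton bound forbids: any linear code correcting even a single deletion has rate at most $1/2$, so the multiplicative rate contribution of $C_\ell$ is at most $1/2$ regardless of $\ell$, and your concatenated construction cannot exceed rate $1/2$. The codes in \cite{cheng2022efficient,con22explicit} that you cite all have rate strictly below $1/2$, consistent with this bound. This is not a technicality --- it is the very barrier the theorem is meant to overcome, and your scheme runs straight into it.

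The paper bypasses the barrier by making the inner codes \emph{list}-decodable rather than uniquely decodable: a random linear code of rate $1-5H(\delta)-\epsilon$ is $(\delta,2^{O(1/\epsilon)})$-list decodable from edits, so inner rate close to $1$ is achievable. Crucially, there is no separate synchronization layer. Instead the inner encoders $\bS_1,\dots,\bS_n\in\mathbb{F}_2^{a\times b}$ are chosen so that any $l+1$ of the edit-balls $B(\row\bS_i\setminus\{0\},\delta b)$ have empty intersection (the ``sync'' property). For any length-$b$ window of the received word this forces at most $l$ indices $i$ for which the window could be a corrupted block-$i$ codeword, and the list-decodability of each $\bS_i$ bounds the number of preimages per such $i$. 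Sliding the window across $\by$ fills boxes $B_1,\dots,B_n$ of bounded total size, and an outer $\mathbb{F}_2$-linear list-\emph{recoverable} code of rate $1-O(\sqrt\gamma)$ then finishes. Your separate-synchronization-layer architecture, besides the adversarial-concentration problem you already flag, would still need linear inner codes of rate approaching $1$; once you are forced to switch those to list decoding, the synchronization information has to be extracted from the inner codes themselves rather than from an auxiliary string, which is essentially the paper's design.
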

\subsection{Notations and Definitions}
We use $d_e(\bx,\by)$ to denote the edit distance between two strings $\bx$ and $\by$, which is the minimum number of insertions and deletions it  needs to change $\bx$ to $\by$.
Let $B(\by,d):=\{\bx \in \{0,1\}^*:d_e(\bx,\by)\leq d\}$. Let $S$ be a set of strings, define $B(S,d)=\bigcup_{\by\in S}B(\by,d)$. Define $B_n(\by,d):=B(\by,d)\cap \{0,1\}^n$.  Sometimes we identify a linear code with its generating matrix. All the vectors in the paper are row vectors. Let $S$ be a set, and $X\xleftarrow{\$}S$ indicates that $X$ is a random variable whose distribution is uniform over $S$. Suppose $\bA$ is a matrix, we use $\row \bA$ to denote the row space of $\bA$. We define the concept of list decoding in the edit distance context.
\begin{definition}
Suppose $\delta\in [0,1]$ and $L$ is a positive integer. 
    A code $\cC\subset \mathbb{F}_{\textcolor{blue}{q}}^n$ is $(\delta,L)$-list decodable (with respect to edits) if for all $\by \in \{0,1\}^*$, $\abs{B_n(\by,\delta n)\cap \cC}\leq L$.
\end{definition}
In our construction, we use  list recoverable codes as outer codes. Here we give the definition of list recoverable codes. 
\begin{definition}
Let $1\leq l\leq L$, $\alpha\in [0,1]$.
    A code $C\subset \mathbb{F}_q^n$ is $(\alpha,l,L)$-list recoverable if, for all $S_1,S_2,\dotsc,S_n\subset \mathbb{F}_q$ with $\abs{S_i}\leq l$, there are at most $L$ codewords $c\in C$ so that $\abs{\{i\in[n]:c_i\notin S_i\}}\leq \alpha n$.
\end{definition}

\subsection{Our Approach}
The general approach leverages code concatenation. Suppose we have encoded our message (via a linear encoder $E'$) to a codeword $\bc'=(c_1',c_2',\dotsc,c_n') \in \mathbb{F}^n_{2^a}$  
for some positive integer $a$. By isomorphism  $\mathbb{F}_{2^a}\simeq \mathbb{F}_2^a$, we can also write $\bc'=(\bc_1',\bc_2',\dotsc,\bc_n') \in \mathbb{F}_2^{an}$. 
To further protect it from insertions and deletions, 
we make use of the codes $\bS_1, \ldots, \bS_n$ to obtain the final codeword
$(\bc_1,\bc_2,\dotsc,\bc_n)=(\bc_1'\bS_1,\bc_2'\bS_2,\dotsc,\bc_n'\bS_n) \in \mathbb{F}_2^{bn}$, where $\bS_i\in \mathbb{F}_2^{a\times b}$. For shorthand, we will refer to each $\bc_i \in \mathbb{F}_2^b$ in our codeword as a block. 
Note that the whole process is linear over $\mathbb{F}_2$.

We now describe the main ideas of our construction. Suppose $\bc'=(\bc_1',\bc_2',\dotsc,\bc_n')$, and 
$(\bc_1,\bc_2,\dotsc,\bc_n)=(\bc_1'\bS_1,\bc_2'\bS_2,\dotsc,\bc_n'\bS_n)$ 
is transmitted. We first assume that only block deletions happen so that for each block $\bc_i$ that is transmitted either it appears in the output or the entire block is deleted.
So, we can assume the decoder receives $(\bc_{i_1},\bc_{i_2},\dotsc,\bc_{i_m})$ where $\{i_1, i_2, \ldots, i_m \} \subseteq [n]$ and $i_1 < i_2 < \cdots < i_m$. As a warm-up, we first consider the problem of trying to align each block of the received sequence to the one transmitted under the following condition.
\begin{condition}\label{condition:2}
    $\row \bS_i\cap \row  \bS_j=0$ for $i\neq j$.
\end{condition}
If Condition~\ref{condition:2} holds, then clearly the receiver can uniquely determine $i_j$ (assuming $\bc_{i_j}$ is non-zero\footnote{For the purposes of brevity, we will assume throughout this section that $\bc_{i_j}\neq 0$.}), since $i_j$ is the unique $i$ satisfying  $\bc_{i_j}\in\row \bS_{i}$. However, it can be shown that Condition~\ref{condition:2} implies that $b\geq 2a$, which means that the rate of $\bS_{i}$'s is at most $\frac{1}{2}$, and so the rate of the resulting concatenated code is at most $\frac{1}{2}$. Since we want to get codes with rate approaching $1$, we will relax the previous condition and next consider the following one instead.

\begin{condition}\label{condition:l}Let $l$ be a positive integer greater than $2$. Then, 
    $\row \bS_{k_1}\cap \row \bS_{k_2}\cap\cdots \cap\row \bS_{k_{l+1}}=0$ for $1\leq k_1<k_2< \dotsc < k_{l+1}\leq n$.
\end{condition}
Condition~\ref{condition:l} is a natural generalization of Condition~\ref{condition:2}. Since the space $\row \bS_{k_1}\cap \row \bS_{k_2}\cap\cdots \cap\row \bS_{k_{l+1}}$ has dimension at least $(\ell+1)a - \ell b$, it follows that the rate of the $\bS_i$'s is at most $\frac{1}{1+1/l}$. The key property here, and one which we will need for the construction, is that as $l$ gets larger, $\frac{1}{1+1/l}$ approaches 1. If Condition~\ref{condition:l} holds, then the decoder can pin down any received block $i_j$ to at most $l$ possibilities. 
Specifically, these possible $i$'s are the ones satisfying $\bc_{i_j}\in\row\bS_{i}$. 

To clarify the decoding process, we will use a collection of decoding boxes,  $B_1, \ldots, B_n$, where each box corresponds to a block in our codewords. Intuitively, the decoding box $B_i \subseteq \mathbb{F}_2^a$ contains a set of vectors from $\mathbb{F}_2^{a}$ that represent possible guesses for the value of each $\bc'_i$ that was used during the encoding process.
In order to generate the set $B_i$, we follow the following procedure. If $i$ is one of the possibilities for the received block $i_j$, then the decoder puts $\bc_{i_j}\bS_i^{-1}$ into $B_i$ (here we abuse notation since $\bS_i$ has full row rank and so $\bc_{i_j}\bS_i^{-1}$ can be uniquely determined). 
We have two observations. First, $\bc_{i_j}\in B_{i_j}$. Second, the total number of items in all the boxes are bounded. Therefore, it may be possible to show the even stronger condition that $\bc_i'\in B_i$ for all but a small fraction  of $i$'s.
Indeed, if this property holds, then if we choose $E'$ to be a list recoverable encoder, then we can output a list that contains $c'$.

Next, we refine our focus on the condition where deletions in the codewords occur such that each block is either entirely deleted or remains fully intact, under which the decoding procedure operates as follows. In this scenario, the decoder may receive a corrupted version of $\bc_{i}$ and the alignment between distinct blocks may not be the same throughout the received vector. To handle the more general framework, assume that the decoder receives the string $\bs$. The decoder will scan $\bs$ from left to right and will examine substrings, which are contiguous windows of symbols, of $\bs$ of length $b$ that are $t$ coordinates (or steps) away from each other for some fixed positive integer $t$. If there exists a $\bc_i$ that has experienced a small fraction of edits, then during the scanning process the receiver will encounter a substring $\by$ from $\bs$ where $d_e(\by,\bc_i)$ is small. Now, suppose we have a substring $\by$ satisfying $d_e(\by,\bc_{i})\leq \delta b$ for some $\delta\in (0,1)$. We want to come up with a condition  that allows for $\bc'_{i}$ to appear in the decoding box $B_{i}$, but $\bc'_i$ also does not appear in too many other decoding boxes. In order to guarantee these properties, we will work with the following generalization of Condition~\ref{condition:l}.


\begin{condition}[$(\delta,l,L)$-sync]\label{condition:delta}
    Let $l$ be a positive integer greater than $2$.
    $$B((\row \bS_{k_1}\setminus\{0\}),\delta b)\cap  B((\row \bS_{k_2}\setminus\{0\}),\delta b)\cap\cdots \cap B((\row \bS_{k_{l+1}}\setminus\{0\}),\delta b)=\emptyset$$ for $1\leq k_1<k_2< \dotsc < k_{l+1}\leq n$. And $\bS_i$ is $(\delta,L)$-list decodable for $i\in [n]$.
\end{condition}
 
Under this more general condition, we now discuss and examine the decoding process in a little more detail. First, the decoder will initialize each decoding box $B_i$ to contain the all-zeros vector (since it has an intersection with each code due to linearity). Define $A_i(\by) :=\{\bx\neq 0:\by\in B(\bx\bS_i,\delta b)\}$ for $i\in[n]$. Following the procedure described in the previous paragraph, for each substring $\by$ in $\bs$, the decoder will add the strings in $A_i(\by)$ to $B_i$. 
If Condition~\ref{condition:delta} (which we call $(\delta,l,L)$-sync) holds, then $A_i(\by)$ is non-empty for at most $\ell$ choices of $i$, and each $A_i(\by)$ has size at most $L$. These observations will help us to eventually bound the size of $B_i$'s.
We will later show in \Cref{lem:sequenceexist} that there exists $\{\bS_i\}_{i}$ that satisfies Condition~\ref{condition:delta} with rate approaching $1$ for specific parameters.

\section{Preliminaries}
We will make use  of small-bias distributions, which is defined below.

\begin{definition}
    Let $n$ be a positive integer and $\epsilon>0$. Let $\bY \sim Y$ be a random vector over $\mathbb{F}_2^n$ distributed according to $Y$. We call $Y$    $\epsilon$-biased   if for all non-zero $\bx \in\mathbb{F}_2^n$, the statistical distance between $\bY \bx^T$ and the uniform distribution over $\mathbb{F}_2$ is at most $\epsilon$. Let $s$ be a positive integer and $\bZ$ be a uniform random vector over $\mathbb{F}_2^s$. We call $g:\mathbb{F}_2^s\rightarrow \mathbb{F}_2^n$ an $\epsilon$-biased generator if $g(\bZ)$ is  $\epsilon$-biased.
\end{definition}

\begin{lemma}[\cite{alon1992simple}]\label{lem:smallbiased}
    For $\epsilon>0$, there exists an explicit $\epsilon$-biased generator $g:\mathbb{F}_2^s\rightarrow\mathbb{F}_2^n$, where $s=O(\log n +\log \frac{1}{\epsilon})$.
\end{lemma}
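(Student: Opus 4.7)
The plan is to prove this via the AGHP-style ``powering'' construction over $\mathbb{F}_{2^m}$. Fix an integer $m$ with $2^m \geq n/\epsilon$, so that $m = O(\log n + \log(1/\epsilon))$, and fix an explicit irreducible polynomial of degree $m$ over $\mathbb{F}_2$ so that arithmetic in $\mathbb{F}_{2^m}$ can be carried out efficiently and elements can be identified with vectors in $\mathbb{F}_2^m$ via the induced basis. The seed will consist of a pair $(x,y) \in \mathbb{F}_{2^m}^2$, for a total of $s = 2m$ bits, and the $i$-th output bit is defined by $g(x,y)_i = \langle x^{i-1}, y \rangle$ for $i \in [n]$, where $\langle \cdot, \cdot \rangle$ denotes the $\mathbb{F}_2$-valued inner product on $\mathbb{F}_2^m$.

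For the bias analysis, fix any non-zero $v \in \mathbb{F}_2^n$. By linearity of the inner product,
$$g(x,y) \cdot v^T \;=\; \sum_{i=1}^n v_i \langle x^{i-1}, y \rangle \;=\; \langle p_v(x), y \rangle,$$
where $p_v(X) = \sum_{i=1}^n v_i X^{i-1} \in \mathbb{F}_{2^m}[X]$ is a non-zero polynomial of degree at most $n-1$. When $x$ is uniform in $\mathbb{F}_{2^m}$, the probability that $p_v(x)=0$ is at most $(n-1)/2^m$ by the standard root bound. Conditioned on $p_v(x) \neq 0$, the map $y \mapsto \langle p_v(x), y \rangle$ is a non-trivial $\mathbb{F}_2$-linear functional on $\mathbb{F}_2^m$, so its value is uniform over $\mathbb{F}_2$ when $y$ is uniform. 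Combining these, the statistical distance of $g(x,y)\cdot v^T$ from the uniform distribution on $\mathbb{F}_2$ is at most $(n-1)/2^m \leq \epsilon$, proving $g$ is $\epsilon$-biased.

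The seed length is $s = 2m = O(\log n + \log(1/\epsilon))$ as required, and each output bit can be computed in $\poly(m)$ time by exponentiation in $\mathbb{F}_{2^m}$. The only step that requires care is making the construction explicit, namely producing an irreducible polynomial of degree $m$ over $\mathbb{F}_2$; this is the main potential obstacle, but it is standard, as such polynomials can be found either by classical deterministic algorithms running in $\poly(m)$ time or, if one is less concerned about preprocessing cost, by exhaustive search over polynomials of degree $m$, which takes time $\poly(n/\epsilon)$. Once the field is fixed, the remaining verification is a one-line reduction to the root bound for univariate polynomials, so the construction meets both the bias requirement and the stated seed-length bound.
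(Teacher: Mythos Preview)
The paper does not prove this lemma; it simply cites \cite{alon1992simple}. Your proposal is a correct and complete reconstruction of one of the standard constructions from that reference (the ``powering'' construction of Alon--Goldreich--H\aa stad--Peralta), so it is entirely consistent with what the paper invokes. The bias bound you state is even slightly loose---the actual statistical distance is $\tfrac{1}{2}\Pr[p_v(x)=0]$, not $\Pr[p_v(x)=0]$---but your bound $(n-1)/2^m \leq \epsilon$ is valid and suffices.
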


The following Lemma bounds the statistical distance between a small biased distribution and the uniform distribution.                 
\begin{lemma}[Vazirani's XOR Lemma \cite{goldreich2011three}] Let $X$ be an $\epsilon$-biased distribution over $\mathbb{F}_2^n$, and let $U$ denote the uniform distribution over $\mathbb{F}_2^n$. The statistical distance between $X$ and $U$ is at most $\epsilon \cdot 2^{n/2}$.
\end{lemma}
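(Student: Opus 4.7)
The plan is to use Fourier analysis over $\mathbb{F}_2^n$. The characters $\chi_{\bx}(\by) = (-1)^{\bx \cdot \by}$ form an orthogonal basis for real-valued functions on the cube, and the strategy is to bound the Fourier coefficients of the signed density $f := P_X - U$ using the $\epsilon$-bias hypothesis, and then convert an $\ell_2$ estimate into the desired $\ell_1$ bound via Cauchy--Schwarz.

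First I would compute the Fourier coefficients $\hat{f}(\bx) = \sum_{\by} f(\by)(-1)^{\bx \cdot \by}$. The coefficient $\hat{f}(0)$ vanishes because $P_X$ and $U$ are both probability distributions, and for $\bx \neq 0$ the uniform part contributes zero, leaving $\hat{f}(\bx) = \mathbb{E}[(-1)^{\bx \cdot X}]$. Next I would translate the hypothesis: if $\Pr[\bx \cdot X = 0] = \tfrac{1}{2} + \eta$, then the statistical distance between $\bx \cdot X$ and a uniform bit is $|\eta|$, while $\mathbb{E}[(-1)^{\bx \cdot X}] = 2\eta$. So $|\hat{f}(\bx)| \leq 2\epsilon$ for every non-zero $\bx$.

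The main computation combines Parseval and Cauchy--Schwarz. With the normalization above, Parseval's identity gives $\sum_{\by} f(\by)^2 = 2^{-n} \sum_{\bx \neq 0} \hat{f}(\bx)^2 \leq 2^{-n} \cdot (2^n - 1) \cdot (2\epsilon)^2 \leq 4\epsilon^2$. Then Cauchy--Schwarz yields $\sum_{\by} |f(\by)| \leq 2^{n/2} \bigl(\sum_{\by} f(\by)^2\bigr)^{1/2} \leq 2\epsilon \cdot 2^{n/2}$. Since the statistical distance equals $\tfrac{1}{2}\sum_{\by} |f(\by)|$, dividing by $2$ gives the claimed bound of $\epsilon \cdot 2^{n/2}$.

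This is a standard Fourier-analytic argument with no essential obstacle. The one point that deserves attention is the constant factor of $2$ in the step $|\hat{f}(\bx)| \leq 2\epsilon$: different sources use either \emph{``statistical distance at most $\epsilon$''} or \emph{``$|\mathbb{E}[(-1)^{\bx \cdot X}]| \leq \epsilon$''} as the definition of $\epsilon$-bias, and matching the authors' statistical-distance convention with the character-expectation that appears naturally in the Fourier expansion is what produces exactly $\epsilon \cdot 2^{n/2}$ rather than a bound that is off by a factor of two.
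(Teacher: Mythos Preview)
Your proof is correct and is precisely the standard Fourier-analytic argument for Vazirani's XOR Lemma; the constant bookkeeping (in particular the factor of $2$ coming from the paper's statistical-distance convention for $\epsilon$-bias) is handled correctly. Note, however, that the paper does not supply its own proof of this lemma: it is stated with a citation to \cite{goldreich2011three} and used as a black box, so there is no in-paper argument to compare against.
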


The next lemma bounds the size of a ball with respect to edit distance.
\begin{lemma}
For $\delta\leq 1/2$, a positive integer $n$, and $\by\in\{0,1\}^*$, we have     $\abs{B_n(\by,\delta n)}\leq 2^{5H(\delta)n}$.
\end{lemma}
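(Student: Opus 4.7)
The plan is to upper bound the ball by enumerating edit transcripts. Fix $\bx \in B_n(\by, \delta n)$. Since any minimum edit sequence from $\by$ to $\bx$ can be reordered to perform all $d$ deletions before all $i$ insertions, with $d + i \leq \delta n$ and $i - d = n - \abs{\by}$, I can specify $\bx$ by choosing which $d$ positions of $\by$ to delete, which $i$ positions of $\bx$ are insertion positions, and the bit values of those insertions. Since the ball is empty unless $(1-\delta)n \leq \abs{\by} \leq (1+\delta)n$, I may assume $\abs{\by} \leq 2n$. This yields the starting bound
\[
\abs{B_n(\by, \delta n)} \leq \sum_{i + d \leq \delta n} \binom{\abs{\by}}{d} \binom{n}{i} 2^i.
\]

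Next, I would bound each factor using $\binom{N}{k} \leq 2^{N H(k/N)}$ together with monotonicity of $H$ on $[0, 1/2]$. For $\delta \leq 1/4$, the fraction $d/\abs{\by}$ is at most $\delta/(1-\delta) \leq 2\delta \leq 1/2$, and concavity of $H$ combined with $H(0)=0$ gives $H(2\delta) \leq 2 H(\delta)$, so $\binom{\abs{\by}}{d} \leq 2^{3 n H(\delta)}$. The remaining factors $\binom{n}{i}$ and $2^i$ are each at most $2^{n H(\delta)}$, the latter using $\delta \leq H(\delta)$ for $\delta \leq 1/2$. Multiplying the three estimates yields $2^{5 n H(\delta)}$, and the $O(n^2)$ summands only contribute a polynomial factor that is absorbed by the slack in the constant $5$.

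The main obstacle I expect is the regime $\delta > 1/4$, where $d/\abs{\by}$ may exceed $1/2$ and the direct entropy bound on $\binom{\abs{\by}}{d}$ no longer gives what I want. In that range I would fall back on the trivial estimates $\binom{\abs{\by}}{d} \leq 2^{\abs{\by}} \leq 2^{(1+\delta)n}$ and $\binom{n}{i} \leq 2^n$, making the total only $2^{O(n)}$; this is dominated by $2^{5 n H(\delta)}$ since $H(\delta)$ is bounded below by the absolute constant $H(1/4)$ once $\delta \geq 1/4$. Combining the two regimes, together with the trivial observation that the ball is a singleton or empty when $\delta n < 1$, will complete the proof.
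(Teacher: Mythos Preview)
Your counting skeleton---delete then insert, and enumerate the resulting transcripts---is the same as the paper's. The execution diverges, and your version carries a small but real gap.

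The paper does not sum over $(d,i)$. It fixes $\lfloor\delta n\rfloor$ candidate positions in $\by$ and, for each, records a bit saying ``delete or keep,'' which replaces your sum $\sum_d\binom{\abs{\by}}{d}$ by the single product $\binom{\abs{\by}}{\delta n}2^{\delta n}$, so no polynomial factor ever appears. After bounding $\binom{\abs{\by}}{\delta n}\le\binom{(1+\delta)n}{\delta n}\le 2^{(1+\delta)nH(\delta/(1+\delta))}$, the paper proves the uniform inequality $(1+\delta)H\!\left(\tfrac{\delta}{1+\delta}\right)\le 2H(\delta)$, valid for all $\delta\in(0,1/2]$, and obtains $\abs{B_n(\by,\delta n)}\le 2^{4H(\delta)n+2\delta n}\le 2^{5H(\delta)n}$ directly---no case split, no polynomial leftover.

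Your gap is in the $\delta\le 1/4$ branch: the three estimates $2^{3nH(\delta)}$, $2^{nH(\delta)}$, $2^{nH(\delta)}$ already multiply to exactly $2^{5H(\delta)n}$ \emph{per summand}, so there is in fact no slack to absorb the $O(n^2)$ terms, and the lemma is stated as an exact inequality for every $n$, not an asymptotic one. The fix is easy. Either avoid the sum as the paper does, or sharpen the first factor: since $m\mapsto mH(d/m)$ is increasing for $m\ge 2d$, one has $\binom{\abs{\by}}{d}\le 2^{(1+\delta)nH(\delta/(1+\delta))}\le 2^{2nH(\delta)}$ by the paper's inequality, leaving a full $H(\delta)n$ of room; alternatively, bound the cumulative sums directly via $\sum_{d\le\delta n}\binom{\abs{\by}}{d}\le 2^{\abs{\by}H(\delta n/\abs{\by})}$ and $\sum_{i\le\delta n}\binom{n}{i}2^i\le 2^{\delta n+nH(\delta)}$. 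Your $\delta>1/4$ branch does have genuine slack ($5H(1/4)\approx 4.06$ versus $2+2\delta\le 3$) and is fine, but with either of the fixes above the case split becomes unnecessary.
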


\begin{proof} 
We first assume that $\abs{\by}\in [n-\delta n,n+\delta n]$.
Every string $\bx$ in $B_n(\by,\delta n)$ can be obtained in the following way. We first choose $\lfloor\delta n\rfloor$ positions from $\abs{\by}$ positions of $\by$. For each position, we can either delete the symbol in this position or not. This will result in some string $\bz$. We then insert $i =n-\abs{\bz}\le \delta n$ bits in $\bz$ to obtain $\bx$. There are   $\binom{n}{i}2^i\le \binom{n}{\delta n}2^{\delta n}$ ways to get $\bx$ from $\bz$. This is because there are $\binom{n}{i}$ ways to choose which elements of $\bx$ are inserted and $2^i$ ways to choose what they are. Therefore, we have

\begin{align*}
     \abs{B_n(\by,\delta n)}&\leq \binom{\abs{\by}}{\delta n}2^{\delta n} \binom{n}{\delta n}2^{\delta n}\\
     &\leq \binom{n+\delta n}{\delta n}^22^{2\delta n}\\
     &\leq 2^{2(1+\delta)nH\left(\frac{\delta}{1+\delta}\right)}2^{2\delta n}.
\end{align*}

If $\abs{\by}<n-\delta n$ or $\abs{\by}>n+\delta n$, then $\abs{B_n(\by,\delta n)}=0$. 
Note that
\begin{align*}(1+\delta)H\left(\frac{\delta}{1+\delta}\right)&\leq (1+\delta)\frac{\delta}{1+\delta}\log \frac{1+\delta}{\delta}+(1+\delta)\frac{1}{1+\delta}\log (1+\delta)\\
&\leq \delta\log \frac{1+\delta}{\delta}+\log (1+\delta)\\
&\leq \delta\log \frac{2}{\delta}+\log \frac{1}{1-\delta}\\
&\leq 2\delta\log \frac{1}{\delta}+2(1-\delta)\log \frac{1}{1-\delta}\\
&\leq 2H(\delta).
\end{align*}
Thus, 

\begin{align*}
     \abs{B_n(\by,\delta n)}&\leq 2^{4H(\delta)n+2\delta n}\leq 2^{5H(\delta)n},
\end{align*}
where the last step follows from the fact that $2x\le H(x)$ for $0\le x\le 1/2$.
\end{proof}

\section{List Decoding Capacity of Random Linear Codes}

We study the list decoding capacity of random linear codes.
\begin{lemma}\label{lem:randomlist}
    Let $\delta\in (0,1/2)$,  $\epsilon>0$,  $L=\ceil{2^{2/\epsilon+1}}$, and $R=1-5H(\delta)-\epsilon$. If $\frac{k}{n}\leq R$ and $\bG \xleftarrow[]{\$}\mathbb{F}_2^{k\times n}$, then
    \begin{align*}
      \Pr[\bG \text{ is not }(\delta,L)\text{-list decodable}]\leq 2^{-0.5n+1}.  
    \end{align*}
\end{lemma}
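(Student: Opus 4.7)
The plan is a standard random-coding union bound: I will upper-bound the probability that some edit-ball of radius $\delta n$ contains more than $L$ codewords of $\bG$. Set $m := \lceil \log_2(L+1) \rceil$.

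First, I reduce to a condition involving linearly independent messages. If $\bG$ fails to be $(\delta, L)$-list decodable, then there is some $\by$ with at least $L+1$ distinct codewords inside $B_n(\by, \delta n)$. These $L+1$ distinct vectors span a subspace of $\mathbb{F}_2^n$ of dimension at least $\log_2(L+1)$, so I can extract from them $m$ linearly independent codewords $\bc_1, \ldots, \bc_m$. Choosing any preimages $\bx_i \in \mathbb{F}_2^k$ with $\bx_i \bG = \bc_i$, the $\bx_i$'s are themselves linearly independent (any dependence among the $\bx_i$'s would push through $\bG$ to a dependence among the $\bc_i$'s). Thus a failure implies the existence of some $\by$ and linearly independent messages $\bx_1, \ldots, \bx_m$ with $\bx_i \bG \in B_n(\by, \delta n)$ for all $i$.

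Next, I union bound. The ball $B_n(\by, \delta n)$ is empty unless $\abs{\by} \in [(1-\delta)n, (1+\delta)n]$, so the number of relevant $\by$ is at most $2^{(1+\delta)n+1}$. The number of ordered linearly independent $m$-tuples in $\mathbb{F}_2^k$ is at most $2^{k m}$. For a uniform $\bG$ and any fixed linearly independent tuple $\bx_1, \ldots, \bx_m$, the images $\bx_i \bG$ are mutually independent and uniform over $\mathbb{F}_2^n$, so the ball-size lemma gives
\[
\Pr\brckt{\bx_i \bG \in B_n(\by, \delta n) \text{ for all } i} \le \parenv{\frac{2^{5 H(\delta) n}}{2^n}}^{m} = 2^{-(1 - 5 H(\delta)) n m}.
\]

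Assembling the three bounds,
\[
\Pr[\bG \text{ is not } (\delta, L)\text{-list decodable}] \le 2^{(1+\delta)n + 1} \cdot 2^{k m} \cdot 2^{-(1 - 5 H(\delta)) n m} \le 2^{(1+\delta)n + 1 - m n \epsilon},
\]
where the last step uses $k/n \le 1 - 5 H(\delta) - \epsilon$. The choice $L = \lceil 2^{2/\epsilon + 1} \rceil$ forces $m \ge 2/\epsilon + 1$, hence $m n \epsilon \ge (2 + \epsilon) n$, and the exponent collapses to $-(1 - \delta + \epsilon)n + 1 \le -n/2 + 1$ since $\delta < 1/2$. The one place that needs care (more a check than a genuine obstacle) is the reduction to linearly independent messages in the first paragraph: that is what licenses the product form for the joint probability under a random $\bG$. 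Everything else is routine volume-versus-count bookkeeping.
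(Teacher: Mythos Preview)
Your proof is correct and follows essentially the same approach as the paper: reduce a list-decoding failure to the existence of $\Theta(1/\epsilon)$ linearly independent messages landing in a common edit-ball, then union bound over centers and tuples using independence of the images under a uniform $\bG$. The only cosmetic difference is that you extract $m=\lceil\log_2(L+1)\rceil$ independent codewords and pull back to messages, whereas the paper extracts $\lfloor\log_2 L\rfloor$ independent messages directly; the arithmetic is otherwise identical.
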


\begin{proof}
    If $\bG$ is not $(\delta,L)$-list decodable,  then there exist $\by\in \mathbb{F}_2^{*}$ and  distinct non-zero $\bx_1,\dotsc,\bx_{L}\in \mathbb{F}_2^{k}$ such that $\bx_{i}\bG \in  B_n(\by,\delta n)$ for all $i\in [L]$. 
    Note that there exists a subset $\{\bu_1,\dotsc,\bu_{\floor{\log L}}\}$ of size $\floor{\log L}$  of  $\{\bx_1,\dotsc,\bx_{L}\}$ that are linearly independent. Thus,

    \begin{align*}
        &\Pr[\bG \text{ is not } (\delta,L) \text{ list decodable}]\\
        &\leq \Pr[\exists \by\in \{0,1\}^* \text{ and linearly independent }\bu_1,\dotsc,\bu_{\floor{\log L}}:\bu_{i}\bG \in B_n(\by,\delta n)]\\
        & \leq\underbrace{2^{(1+\delta)n+1}}_{\text{choice of }\by} \underbrace{2^{ Rn\floor{\log L}}}_{\text{choice of }\bu_i's}\left(\frac{\underbrace{2^{5H(\delta)n}}_{\text{size of }B_n(\by,\delta n)}}{2^n}\right)^{\floor{\log L}}\\
        &\leq 2^{((1+\delta)+(R+5H(\delta)-1)\frac{2}{\epsilon})n+1}\\
        &\leq 2^{(\delta-1)n+1}\\
        &\leq 2^{-0.5n+1}.
    \end{align*}
\end{proof}
Note that in the Hamming distance scenario, a random linear code in $\mathbb{F}_2^n$ of rate $1-H(\delta)-\epsilon$ can be list decoded from a fraction of $\delta$ errors with high probability with lists of size at most $O(1/\epsilon)$ \cite{guruswami2010list}. However, in the edit distance scenario, we do not know whether the $2^{O(1/\epsilon)}$ list size in \Cref{lem:randomlist} can be improved.

\section{Sync Matrix Sequences}
In this section, we define what we call sync sequences. In the following definition, we rewrite Condition~\ref{condition:delta} explicitly.
\begin{definition}\label{def:sync}
    We call a sequence of binary matrices $\bS_1,\bS_2,\dotsc,\bS_n$  (of size $a\times b$) $(\delta,l)$-sync if there does not exist $1\leq i_1<i_2<\dotsc<i_{l+1}\leq n$,  non-zero $\bx_1,\dotsc,\bx_{l+1}\in \mathbb{F}_2^a$ and $\bv\in \{0,1\}^{*}$, such that $\bx_j\bS_{i_j}\in B_b(\bv,\delta b)$ holds for $j\in [l+1]$. We call $\bS_1,\bS_2,\dotsc,\bS_n$  $(\delta,l,L)$-sync if \begin{enumerate}
    \item $\bS_1,\bS_2,\dotsc,\bS_n$ is  $(\delta,l)$-sync.
    \item  $\bS_i$ is $(\delta,L)$-list decodable for $i\in [n]$.
    \item $\bS_i$ has full rank for $i\in[n]$.       
\end{enumerate}
\end{definition}
The next lemma shows that a random matrix sequence of specific parameters has high probability to be $(\delta,l,
L)$-sync.

\begin{lemma}\label{lem:sequenceexist}
Let $\delta\in (0,1/2)$,  $l$ be a positive integer, $b\geq 4(l+1)\log n$, $L=2^{l+2}$, and  $R=1-\frac{2}{l+1}-5H(\delta)$.   If $a\leq Rb$, and  $\bS_1\xleftarrow[]{\$}\mathbb{F}_2^{a\times b},\dotsc,\bS_n\xleftarrow[]{\$}\mathbb{F}_2^{a\times b}$ are independent, then  $\bS_1,\bS_2,\dotsc,\bS_n$  is $(\delta,l,L)$-sync with  probability at least $1-3/n$ for sufficiently large $n$.
\end{lemma}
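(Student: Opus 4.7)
The plan is to verify the three defining properties of a $(\delta,l,L)$-sync sequence from Definition~\ref{def:sync} separately, and then combine the three failure probabilities via a union bound.

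\textbf{Full rank.} A random $a\times b$ matrix over $\mathbb{F}_2$ with $a\le b$ fails to have full row rank with probability at most $2^{a-b}$. Since $a\le Rb=(1-\tfrac{2}{l+1}-5H(\delta))b$, so $b-a\ge \tfrac{2b}{l+1}\ge 8\log n$ by the assumption $b\ge 4(l+1)\log n$, a union bound over the $n$ matrices gives a failure probability at most $n\cdot 2^{-8\log n}=n^{-7}$.

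\textbf{List decodability.} Apply Lemma~\ref{lem:randomlist} individually to each $\bS_i$ with parameter $\epsilon=2/(l+1)$; the assumption on $R$ and $L=2^{l+2}\ge\lceil 2^{2/\epsilon+1}\rceil$ match the hypotheses there. Each $\bS_i$ fails to be $(\delta,L)$-list decodable with probability at most $2^{-0.5b+1}$, and $b\ge 4(l+1)\log n\ge 4\log n$ makes the union bound over $i\in[n]$ yield a total failure probability at most $n\cdot 2^{-2\log n+1}=2/n$.

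\textbf{Sync property.} This is the main step. I would bound the probability of the bad event by a union bound over (i) the index tuple $1\le i_1<\dots<i_{l+1}\le n$, contributing at most $n^{l+1}$ choices; (ii) the nonzero vectors $\bx_1,\dots,\bx_{l+1}\in\mathbb{F}_2^a$, contributing $2^{a(l+1)}$ choices; and (iii) the center $\bv$, which must have length in $[(1-\delta)b,(1+\delta)b]$, contributing at most $2^{(1+\delta)b+1}$ choices. For a fixed tuple, each $\bx_j\bS_{i_j}$ is uniform over $\mathbb{F}_2^b$ (because $\bx_j\ne 0$ and $\bS_{i_j}$ is a uniform random matrix), and the $l+1$ products are mutually independent since the indices $i_j$ are distinct. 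Hence the probability that all $l+1$ products lie in $B_b(\bv,\delta b)$ is at most $(2^{5H(\delta)b}/2^b)^{l+1}$ using the ball-size bound from the preceding lemma. Combining,
\begin{align*}
\Pr[\text{not $(\delta,l)$-sync}]
&\le 2\cdot n^{l+1}\cdot 2^{a(l+1)}\cdot 2^{(1+\delta)b}\cdot 2^{-(1-5H(\delta))b(l+1)}\\
&\le 2\cdot n^{l+1}\cdot 2^{(l+1)(a-(1-5H(\delta))b)+(1+\delta)b}.
\end{align*}
Substituting $a\le Rb=(1-\tfrac{2}{l+1}-5H(\delta))b$ makes the exponent $(l+1)(a-(1-5H(\delta))b)\le -2b$, so the whole bound collapses to $2\cdot n^{l+1}\cdot 2^{-(1-\delta)b}$. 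Using $\delta<1/2$ and $b\ge 4(l+1)\log n$ gives $(1-\delta)b\ge 2(l+1)\log n$, so the quantity is at most $2/n^{l+1}\le 2/n^2$ for $l\ge 2$.

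Summing the three failure probabilities yields at most $n^{-7}+2/n+2/n^2\le 3/n$ for sufficiently large $n$, proving the claim. The main obstacle is the sync step: one must notice that independence across distinct row indices is what allows the $(l+1)$-fold exponent to overwhelm both the $n^{l+1}$ tuple count and the $2^{a(l+1)}$ vector count, and this in turn is exactly why the rate $R$ is set to $1-\tfrac{2}{l+1}-5H(\delta)$. The other two properties are straightforward applications of Lemma~\ref{lem:randomlist} and a rank computation.
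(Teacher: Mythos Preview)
Your proof is correct and follows essentially the same approach as the paper: a union bound over the three defining properties, with the sync step handled by summing over index tuples, nonzero vectors, and centers $\bv$, and the list-decodability step reduced to Lemma~\ref{lem:randomlist} with $\epsilon=2/(l+1)$. The only slip is the parenthetical ``for $l\ge 2$'' at the end of the sync analysis---the bound $2/n^{l+1}\le 2/n^2$ already holds for every positive integer $l$, so no extra hypothesis is needed.
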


\begin{proof}
Recall that $\bS_1,\bS_2,\dotsc,\bS_n$ is $(\delta,l,L)$-sync if it satisfies the three conditions in \Cref{def:sync}. It suffices to bound the failure probability of 1), 2), and 3)  separately.

Note that 1) fails implies that there exist $1\leq i_1<i_2<\dotsc<i_{l+1}\leq n$,  non-zero $\bx_1,\dotsc,\bx_{l+1}\in \mathbb{F}_2^a$ and $\bv\in \{0,1\}^{*}$, where $\abs{\bv}\in[(1-\delta)b,(1+\delta)b]$, such that $\bx_j\bS_{i_j}\in B_b(\bv,\delta b)$ for $j\in[l+1]$. So we have 
\begin{align*}
    \Pr[1)\text{ fails}]&\leq \underbrace{n^{l+1}}_{\text{choice of }i_j's}\underbrace{2^{(l+1)a}}_{\text{choice of }\bx_j's}\underbrace{2^{(1+\delta)b+1}}_{\text{choice of }\bv}\left(\frac{\underbrace{2^{5H(\delta)b}}_{\text{size of }B_b(\bv,\delta b)}}{2^b}\right)^{l+1} \nonumber\\
    &\leq n^{l+1}2^{(l+1)b\left(R+\frac{1+\delta}{l+1}+5H(\delta)-1\right)+1}\\
    &\leq n^{l+1}2^{(l+1)b\left(\frac{\delta-1}{l+1}\right)+1}\\
    &\leq n^{l+1}2^{-b/2+1}\\
    &\leq 1/n.
\end{align*}
Similarly, 2) fails implies that there exists $i\in[n]$ such that $\bS_i$ is not $(\delta,L)$-list decodable.
 Take $b$ and $\frac{2}{l+1}$ to be the $n$ and $\epsilon$ in \Cref{lem:randomlist}, 
then we have 
\begin{align*}
    \Pr[2)\text{ fails}]\leq n 2^{-0.5b+1}\leq 1/n.
\end{align*}
Condition 3) fails implies that there exists $i\in[n]$ and non-zero $\bx\in \mathbb{F}_2^a$ such that $\bx\bS_i=0$. 
\begin{align*}
    \Pr[3) \text{ fails}]&\leq \frac{n2^a}{2^b}\\
    &\leq \frac{n}{2^{(1-R)b}}\\
    &\leq \frac{n}{2^{\frac{2b}{l+1}}}\\
    &\leq 1/n.
\end{align*}
Therefore, $\bS_1,\bS_2,\dotsc,\bS_n$ is $(\delta,l,L)$-sync with  probability at least $1-3/n$.
\end{proof}

\subsection{Construction of Sync  Matrix Sequences}
We first show that it is easy to verify that a sequence is $(\delta,l,L)$-sync.
\begin{lemma}\label{lem:versync}
    For $0<\delta<1$, positive integer $l$ and $L$, given a sequence of binary matrices $\bS_1,\bS_2,\dotsc,\bS_n$ of size $a\times b$, where $a,b=O(\log n)$, it takes polynomial time to verify whether $\bS_1,\bS_2,\dotsc,\bS_n$ is  $(\delta,l,L)$-sync.
    
\end{lemma}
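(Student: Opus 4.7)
The plan is to verify the three conditions of Definition~\ref{def:sync} independently and show each check is polynomial in $n$. The key arithmetic fact that makes everything work is that $a, b = O(\log n)$, so $2^a, 2^b, $ and the number of binary strings of length in $[(1-\delta)b, (1+\delta)b]$ are all $\poly(n)$, and any single edit-distance computation on length-$b$ strings runs in $O(b^2) = \poly(\log n)$ time by the standard dynamic program.

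Condition 3 (full rank) is handled by running Gaussian elimination on each $\bS_i$, costing $O(b^3)$ per matrix, so $O(n b^3) = \poly(n)$ overall. Condition 2 ($(\delta,L)$-list decodability of each $\bS_i$) is handled by enumerating all candidate centers $\by$ with $|\by| \in [(1-\delta)b,(1+\delta)b]$ (only $\poly(n)$ of them), enumerating all $2^a$ codewords $\bx\bS_i$, computing $d_e(\by,\bx\bS_i)$, and checking that no $\by$ has more than $L$ codewords within $\delta b$. This is $\poly(n)$ per matrix and $\poly(n)$ in total.

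For Condition 1 ($(\delta,l)$-sync), I would enumerate all $\binom{n}{l+1}$ ordered tuples $i_1<\cdots<i_{l+1}$, which is $\poly(n)$ since $l$ is treated as a fixed constant (this is the regime used elsewhere in the paper). For each tuple, I exploit the fact that once a center $\bv$ is fixed, the existential conditions on $\bx_1,\dots,\bx_{l+1}$ decouple across $j$. Thus the tuple is a witness of failure if and only if there is some $\bv$ of length in $[(1-\delta)b,(1+\delta)b]$ for which, \emph{separately for each} $j \in [l+1]$, at least one non-zero $\bx \in \mathbb{F}_2^a$ satisfies $\bx\bS_{i_j} \in B_b(\bv, \delta b)$. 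This last test is done by iterating over the $2^a - 1$ non-zero vectors and computing one edit distance, taking $\poly(n)$ time per $(j,\bv)$ pair. Summing, each tuple costs $\poly(n)$, and the total is $\poly(n)$.

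The main (and only) concern is the $n^{l+1}$ factor from enumerating tuples; if $l$ were allowed to scale with $n$ this would break polynomiality, but in the paper's intended parameter regime $l$ is a constant depending on the target rate $1-O(\eta^{1/4})$, so the exponent is bounded and the overall algorithm is polynomial time. No clever algorithmic idea is required beyond brute-force enumeration over the small spaces permitted by $a,b = O(\log n)$.
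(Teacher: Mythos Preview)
Your proposal is correct and matches the paper's own proof almost exactly: both verify the three conditions of Definition~\ref{def:sync} by brute-force enumeration, using $a,b=O(\log n)$ to keep the number of candidate vectors $\bx\in\mathbb{F}_2^a$, centers $\bv$ of length in $[(1-\delta)b,(1+\delta)b]$, and index tuples $(i_1,\dots,i_{l+1})$ all polynomial in $n$. The only (inessential) difference is that for Condition~1 you observe the test decouples over $j$ once $\bv$ is fixed, whereas the paper simply enumerates all $(l+1)$-tuples $(\bx_1,\dots,\bx_{l+1})$ directly; both give $\poly(n)$.
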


\begin{proof}
    It suffices to verify the three conditions in \Cref{def:sync}. For the first condition, one needs to verify that for each $1\leq i_1<i_2<\dotsc<i_{l+1}\leq n$, $\bx_{1},\bx_2,\dotsc,\bx_{l+1}\in \mathbb{F}_2^a$, and $\bv\in\{0,1\}^*$, where $\abs{\bv}\in[(1-\delta)b,(1+\delta)b]$, whether $\bx_j\bS_{i_j}\in B_b(\bv,\delta b)$ holds for all $j\in [l+1]$. There are $O(n^{l+1})$ choices for  $1\leq i_1<i_2<\dotsc<i_{l+1}\leq n$, $2^{(l+1)a}$ choices for $\bx_{1},\bx_2,\dotsc,\bx_{l+1}\in \mathbb{F}_2^a$, and $2^{O(b)}=\poly(n)$ choices for $\bv$. Hence, it takes $\poly(n)$ time to verify the first condition.

    For the second condition, one needs to verify that for each $1\leq i\leq n$, $\bv\in\{0,1\}^*$, where $\abs{\bv}\in [(1-\delta )b,(1+\delta)b]$ whether $\abs{\{\bx\in \mathbb{F}_2^a:\bx\bS_i\in B_b(\bv,\delta b)\}}\leq L$. There are $O(n)$ choices for $i$, $2^{O(b)}=\poly(n)$ choices for $\bv$, and computing $\{\bx\in \mathbb{F}_2^a:\bx\bS_i\in B_b(\bv,\delta b)\}$ takes $\poly(n)$ time since $a=O(\log n)$. Hence, it takes $\poly(n)$ time to verify the second condition.
    It is clear that the third condition also takes $\poly(n)$ time to verify.
\end{proof}
  We will use the next lemma to construct sync matrix sequence. 
 
\begin{lemma}\label{lem:constructsequence}
Let $\delta\in (0,1/2)$, $l$ be a positive integer, $b\geq 4(l+1)\log n$,  $L=2^{l+1}$,   $R=1-\frac{2}{l+1}-5H(\delta)$, 
and $a\leq Rb$. Let $g:\{0,1\}^{\phi}\rightarrow \{0,1\}^{a b}$ be a $2^{-2b(l+1)}$-biased generator.  Let  $r_1\xleftarrow{\$} \{0,1\}^{\phi},\dotsc, r_n\xleftarrow{\$}\{0,1\}^{\phi}$ be $(l+1)$-wise independent (any $l+1$ of them are independent).
Let $\bS_i=g(r_i)$ for $i\in [n]$ (treat $\{0,1\}^{ab}$ as the set of $a\times b$ matrices), then $\bS_1,\bS_2,\dotsc,\bS_n$ is  $(\delta,l,L)$-sync with  probability at least $1-n^{-(l+1)}2^{l+3}-2n^{-7}$ for sufficiently large $n$.
\end{lemma}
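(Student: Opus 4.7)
The plan is to mirror the three-way decomposition used in Lemma~\ref{lem:sequenceexist}: split the failure event according to the three requirements in Definition~\ref{def:sync} and bound each separately by a union bound. The only change is that uniform joint independence of the $\bS_i$'s is now replaced by two weaker ingredients: the $\epsilon$-biased property of $g$ (converted to statistical distance via Vazirani's XOR lemma), and the $(l+1)$-wise independence of the seeds $r_1,\dotsc,r_n$. The reason this is enough is the crucial structural observation that each of the three failure events is determined by at most $l+1$ matrices simultaneously: Condition~1 in Definition~\ref{def:sync} fixes at most $l+1$ indices $i_1<\dotsc<i_{l+1}$, while Conditions~2 and 3 only involve one $\bS_i$ at a time.

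For Condition~1, fix indices $i_1<\dotsc<i_{l+1}$, nonzero $\bx_1,\dotsc,\bx_{l+1}\in\mathbb{F}_2^a$, and $\bv$ with $\abs{\bv}\in[(1-\delta)b,(1+\delta)b]$. By $(l+1)$-wise independence of the seeds, $\bS_{i_1},\dotsc,\bS_{i_{l+1}}$ are mutually independent. For each $j$, since $\bx_j\neq 0$, the linear map $\bS\mapsto \bx_j\bS$ is surjective onto $\mathbb{F}_2^b$, so the pushforward of the $\epsilon$-biased distribution on $\mathbb{F}_2^{ab}$ is $\epsilon$-biased on $\mathbb{F}_2^b$; Vazirani's XOR lemma then yields $\Pr[\bx_j\bS_{i_j}\in B_b(\bv,\delta b)] \leq |B_b(\bv,\delta b)|/2^b + \epsilon\cdot 2^{b/2} \leq 2^{5H(\delta)b-b}+2^{-2b(l+1)+b/2}$. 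The choice $\epsilon=2^{-2b(l+1)}$ was calibrated precisely so the additive error is dominated by the main term. Taking the product across the $l+1$ independent factors, and then the union bound over $O(n^{l+1})$ index tuples, $2^{(l+1)a}$ tuples of $\bx_j$'s, and $O(2^{(1+\delta)b})$ choices of $\bv$, reproduces the exponent arithmetic of Lemma~\ref{lem:sequenceexist}; after substituting $a\leq Rb$ and $R=1-\tfrac{2}{l+1}-5H(\delta)$ the overall exponent collapses to $(\delta-1)b\leq -b/2$, and then $b\geq 4(l+1)\log n$ converts $2^{-b/2}$ into $n^{-2(l+1)}$. The residual combinatorial constants then produce the bound $n^{-(l+1)}\cdot 2^{l+3}$.

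For Conditions~2 and 3, only a single $\bS_i$ is involved, so no joint independence is needed and the $\epsilon$-biased property alone suffices. For Condition~2, I will adapt the proof of Lemma~\ref{lem:randomlist}: extract $l+1$ linearly independent witnesses $\bu_1,\dotsc,\bu_{l+1}$ from the hypothetical list of $L=2^{l+1}$ bad codewords (possible since $L\geq 2^{l+1}$), and apply Vazirani's XOR lemma to the surjective linear map $\bS\mapsto (\bu_1\bS,\dotsc,\bu_{l+1}\bS)$ so that the joint distribution is within $\epsilon\cdot 2^{b(l+1)/2}$ of uniform. This gives a per-$i$ failure probability essentially matching the random bound $2^{-b/2+1}$; union bounding over $i\in[n]$ contributes to the $2n^{-7}$ term via $b\geq 4(l+1)\log n$. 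For Condition~3, Vazirani on $\bS\mapsto \bx\bS$ yields $\Pr[\bx\bS_i=0]\leq 2^{-b}+\epsilon\cdot 2^{b/2}$, and a union bound over $\bx\in\mathbb{F}_2^a\setminus\{0\}$ and $i\in[n]$ gives $n\cdot 2^{a-b}$ up to a negligible error; since $a-b\leq -\tfrac{2b}{l+1}$, the bound $2^{-\tfrac{2b}{l+1}}\leq n^{-8}$ also contributes to the $2n^{-7}$ term.

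The main obstacle is not conceptual but arithmetic: one must track that the Vazirani error $\epsilon\cdot 2^{(\cdot)/2}$ in each of the three steps is dominated by the main counting term, which is exactly what pinning $\epsilon=2^{-2b(l+1)}$ guarantees. The role of $(l+1)$-wise independence is minimal but indispensable---Condition~1 fails only when $l+1$ matrices misbehave simultaneously, so no stronger independence of the seeds is required.
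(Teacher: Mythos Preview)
Your approach is essentially identical to the paper's: the same three-way split according to Definition~\ref{def:sync}, the same use of Vazirani's XOR lemma on pushforwards of the $\epsilon$-biased distribution (applied per factor for Condition~1 and to the joint $(\bu_1\bS_i,\dotsc,\bu_{l+1}\bS_i)$ for Condition~2), and the same reliance on $(l+1)$-wise independence of the seeds only where $l+1$ distinct matrices appear.

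The one discrepancy is in the final bookkeeping. You allocate Condition~2 to the $2n^{-7}$ term, but $n\cdot 2^{-b/2+1}\le 2n^{-2l-1}$ is \emph{not} bounded by $n^{-7}$ when $l\in\{1,2\}$, so this allocation fails for small $l$. The paper instead notes that the union bound for Condition~2 is exactly the Condition~1 expression with the leading $n^{l+1}$ replaced by $n$, hence also $\le n^{-(l+1)}2^{l+2}$; Conditions~1 and~2 then together make up the $n^{-(l+1)}2^{l+3}$, and Condition~3 alone gives the $2n^{-7}$. (Relatedly, your Condition~1 constant should be $2^{l+2}$, not $2^{l+3}$: the residual constants are a $+1$ from the count of $\bv$ and a $+1$ from each of the $l+1$ Vazirani-corrected factors, totaling $2^{l+2}$.) With this relabeling your argument goes through for all positive integers $l$.
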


\begin{proof}
The proof is similar to the proof of \Cref{lem:sequenceexist}. It suffice to bound the failure probability for the three  conditions in \Cref{def:sync}, namely $\Pr[1)\text{ fails}]$, $\Pr[2)\text{ fails}]$, and $\Pr[3)\text{ fails}]$. 

Condition 1) fails implies that there exist $1\leq i_1<i_2<\dotsc<i_{l+1}\leq n$,  non-zero $\bx_1,\dotsc,\bx_{l+1}\in \mathbb{F}_2^a$ and $\bv\in \{0,1\}^{*}$, where $\abs{\bv}\in[(1-\delta)b,(1+\delta)b]$, such that $\bx_j\bS_{i_j}\in B_b(\bv,\delta b)$ for $j\in[l+1]$. Note that each $\bx_j\bS_{i_j}$ is $2^{-2b(l+1)}$-biased, thus by Vazirani's XOR Lemma, $\bx_j\bS_{i_j}$ is $2^{-2b(l+1)}2^{b/2}$-close to uniform distribution over $\{0,1\}^b$. So for fixed $\bx_j$ and $\bv$, it holds that
\begin{align*}
    \Pr[r_j\xleftarrow{\$}\{0,1\}^{\phi(b)}:\bx_j\bS_{i_j}\in B_b(\bv,\delta b)]&\leq \frac{\abs{B_b(\bv,\delta b)}}{2^b}+2^{-2b(l+1)}2^{b/2}\\
    &\leq \frac{2^{5H(\delta)b}}{2^b}+2^{-2b(l+1)}2^{b/2}\\
    &\leq \frac{2^{5H(\delta)b}}{2^b}+2^{-b}\\
    &\leq \frac{2^{5H(\delta)b+1}}{2^b}.
\end{align*}
Moreover, since $r_1,r_2,\dotsc,r_n$ are $(l+1)$-wise independent, we have 
\begin{align}
    \Pr[1)\text{ fails}]&\leq \underbrace{n^{l+1}}_{\text{choice of }i_j's}\underbrace{2^{(l+1)a}}_{\text{choice of }\bx_j's}\underbrace{2^{(1+\delta)b+1}}_{\text{choice of }\bv}\left(\frac{2^{5H(\delta)b+1}}{2^b}\right)^{l+1}\label{eq:1}\\
    &\leq n^{l+1}2^{(l+1)(Rb+5H(\delta)b+1-b)+(1+\delta)b+1}\nonumber\\
    &\leq n^{l+1}2^{(l+1)(-\frac{2b}{l+1}+1)+(1+\delta)b+1}\nonumber\\
    &\leq n^{l+1}2^{(\delta-1)b+l+2}\nonumber\\
    &\leq n^{l+1}2^{-0.5b+l+2}\nonumber\\
    &\leq n^{-(l+1)}2^{l+2}.\nonumber
\end{align}

Condition 2) fails implies that there exists $i\in[n]$, $\bv\in \{0,1\}^*$, and linearly independent $\bu_1,\bu_2,\dotsc,\bu_{l+1}\in \mathbb{F}_2^a$ such that $\bu_j\bS_i\in B_b(\bv,\delta b)$ for $j\in [l+1]$. Since $\bu_1,\bu_2,\dotsc,\bu_{l+1}$ are linearly independent, the joint distribution $(\bu_1\bS_i,\bu_2\bS_i,\dotsc,\bu_{l+1}\bS_i)$ is $2^{-2b(l+1)}$-biased over $\{0,1\}^{b(l+1)}$. Thus, by Vazirani's XOR lemma, the distribution  $(\bu_1\bS_i,\bu_2\bS_i,\dotsc,\bu_{l+1}\bS_i)$ is $2^{-2b(l+1)}2^{\frac{b(l+1)}{2}}$-close to (thus $2^{-b(l+1)}$ close to) the uniform distribution over $\{0,1\}^{b(l+1)}$. Hence,
\begin{align*}
    \Pr[\bu_j\bS_i\in B_b(\bv,\delta b) \text{ for }j\in [l+1]]&\leq \left(\frac{\abs{B_b(\bv,\delta b)}}{2^b}\right)^{l+1}+2^{-b(l+1)}\\
    &\leq \left(\frac{2^{5H(\delta)b}+1}{2^b}\right)^{l+1}\\
    &\leq \left(\frac{2^{5H(\delta)b+1}}{2^b}\right)^{l+1}.
\end{align*}
Thus, 
\begin{align}
    \Pr[2)\text{ fails}]&\leq \underbrace{n}_{\text{choice of }i}\underbrace{2^{a(l+1)}}_{\text{choice of }\bu_j's}\underbrace{2^{(1+\delta)b+1}}_{\text{choice of }\bv}\left(\frac{2^{5H(\delta)b+1}}{2^b}\right)^{l+1}\nonumber\\
    &\leq  n2^{a(l+1)}2^{(1+\delta)b+1}\left(\frac{2^{5H(\delta)b+1}}{2^b}\right)^{l+1}\label{eq:2}\\
    &\leq  n^{-(l+1)}2^{l+2}, \nonumber
\end{align}
where the last inequality comes from comparing \eqref{eq:1} and \eqref{eq:2}.

Condition 3) fails implies that there exists $i\in[n]$ and non-zero $\bx\in\mathbb{F}_2^a$ such that $\bx\bS_i=0$. Note that for a fixed $i$ and a fixed  non-zero $\bx$, $\bx \bS_i$ is a $2^{-2b(l+1)}$-biased distribution over $\{0,1\}^b$. By Vazirani's XOR Lemma, the statistical distance between $\bx \bS_i$ and the uniform distribution is at most $2^{-2b(l+1)}2^{b/2}$. So, 
\begin{align*}
    \Pr[\bx\bS_i=0]&\leq 2^{-b}+2^{-2b(l+1)}2^{b/2}\\
    &\leq 2^{-b}+2^{-b}\\
    &\leq 2^{-b+1}.
\end{align*}
Thus, 
\begin{align*}
    \Pr[3) \text{ fails}]&\leq n 2^{a}2^{-b+1}\\
    &\leq n2^{-(1-R)b+1}\\
    &\leq n2^{-\frac{2b}{l+1}+1}\\
    &\leq 2n^{-7}.
\end{align*}

So, $\bS_1,\bS_2,\dotsc,\bS_n$ is  $(\delta,l,L)$-sync with  probability at least $1-n^{-(l+1)}2^{l+3}-2n^{-7}$.
\end{proof}

\begin{corollary}\label{cor:syncconstuction}
Let $\delta\in (0,1/2)$, $l$ be a positive integer, $4(l+1)\log n\leq b=O(\log n)$,  $L=2^{l+1}$,   $R=1-\frac{2}{l+1}-5H(\delta)$, 
and $a\leq Rb$.
    It takes $\poly(n)$ time  to construct a $(\delta,l,L)$-sync sequence $\bS_1,\bS_2,\dotsc,\bS_n$, where $\bS_i\in \mathbb{F}_2^{a\times b}$ for $i\in [n]$.
\end{corollary}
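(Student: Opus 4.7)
The plan is to derandomize the probabilistic construction of \Cref{lem:constructsequence} by enumerating over a sufficiently small seed space and verifying each candidate using \Cref{lem:versync}. First, I would invoke \Cref{lem:smallbiased} with $\epsilon = 2^{-2b(l+1)}$ to obtain an explicit $\epsilon$-biased generator $g:\{0,1\}^\phi \to \{0,1\}^{ab}$ with seed length $\phi = O(\log(ab) + b(l+1)) = O(\log n)$; here I use that $a,b = O(\log n)$ and that the hypothesis $4(l+1)\log n \le b = O(\log n)$ forces $l = O(1)$.

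Next, I would construct $(l+1)$-wise independent random variables $r_1,\dotsc,r_n$ over $\{0,1\}^\phi$ via the standard polynomial-evaluation construction in a suitable extension field of $\mathbb{F}_2$ of size $\max(n,2^\phi)$, which uses a seed of total length $O((l+1)\phi) = O(\log n)$. Setting $\bS_i = g(r_i)$ produces the candidate sequence, and \Cref{lem:constructsequence} guarantees that a uniformly chosen seed yields a $(\delta,l,L)$-sync sequence with probability at least $1 - n^{-(l+1)}2^{l+3} - 2n^{-7}$, which is strictly positive for sufficiently large $n$.

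Because the total seed is $O(\log n)$ bits, the seed space has size $2^{O(\log n)} = \poly(n)$. I would iterate through every seed, construct the corresponding $\bS_1,\dotsc,\bS_n$ in $\poly(n)$ time, and run the verification procedure from \Cref{lem:versync} in $\poly(n)$ time (valid since $a,b = O(\log n)$). The positive success probability ensures at least one seed produces a valid sequence, so the search necessarily succeeds, and the total running time is $\poly(n)\cdot\poly(n) = \poly(n)$.

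The main obstacle is ensuring that the combined seed length truly stays $O(\log n)$: both the bias-generator seed and the $(l+1)$-wise independent seed must remain logarithmic for brute-force enumeration to be polynomial. This hinges critically on $b = O(\log n)$ and $l = O(1)$; without these, $\phi$ would blow up to superlogarithmic size and derandomization-by-enumeration would no longer run in polynomial time. A secondary subtlety is checking that the $(l+1)$-wise independent construction over $\{0,1\}^\phi$ can be realized with only $O((l+1)\phi)$ bits of seed while preserving the exact independence required by the bound in \Cref{lem:constructsequence}, but this is standard for limited-independence families built from low-degree polynomials.
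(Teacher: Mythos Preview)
Your proposal is correct and follows essentially the same approach as the paper: instantiate the small-bias generator from \Cref{lem:smallbiased} with seed length $O(b)=O(\log n)$, build the $(l+1)$-wise independent family with $O(\log n)$ additional random bits, enumerate all $\poly(n)$ seeds, and verify each candidate via \Cref{lem:versync}. Your explicit observation that the hypothesis $4(l+1)\log n\le b=O(\log n)$ forces $l=O(1)$ is a helpful clarification that the paper leaves implicit.
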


\begin{proof}
By \Cref{lem:smallbiased}, there exists a $2^{-2b(l+1)}$-biased generator $g:\{0,1\}^{\phi(b)}\rightarrow \{0,1\}^{a\times b}$, where $\phi(b)=O(b)$. Take $\phi=\phi(b)$ in \Cref{lem:constructsequence}. 

    One needs $O(\log n+b)=O(\log n)$ random bits to generate the $(l+1)$-wise independent $r_1\xleftarrow{\$} \{0,1\}^{\phi},\dotsc, r_n\xleftarrow{\$}\{0,1\}^{\phi}$ in \Cref{lem:constructsequence}(\cite{vadhan2012pseudorandomness} Corollary 3.34). So one can try all the $2^{O(\log n)}=\poly(n)$ possibilities in \Cref{lem:constructsequence}. For each possibility, by Lemma~\ref{lem:versync}, it takes $\poly(n)$ time to verify whether it is $(\delta,l,
    L)$-sync. Therefore, it takes $\poly(n)$ time to construct a $(\delta,l,L)$-sync sequence.
\end{proof}

\section{Code Construction}
In this section, we construct list decodable codes correcting insdels with rate approaching $1$. We use codes concatenation.
For parameter $0<\gamma<1/8$, we define the inner codes and outer codes as follows.

\textbf{Inner Codes}
Let $\delta=\delta(\gamma)=4\gamma$, $l=l(\gamma)=\ceil{\frac{1}{\gamma}}-1$,  $R=R(\gamma)=1-\frac{2}{l+1}-5H(\delta)$, $b=b(\gamma,n)=\ceil{ 4(l+1)\log n}$, $a=a(\gamma,n)=\floor{Rb}$, $L=L(\gamma)=2^{l+1}$.
By \Cref{cor:syncconstuction}, we can construct a 
$(4\gamma,l,L)$-sync sequence  $\bS_1,\bS_2,\dotsc,\bS_n$, where $\bS_i\in \mathbb{F}_2^{a\times b}$ for $i\in[n]$. Moreover, the construction takes polynomial time.

\textbf{Outer Codes}
We will leverage the following list recoverable codes.
\begin{theorem}[\cite{hemenway2019local} Theorem A.1]\label{thm:a7}
There are constant $c,c_0,c_1$ so that the following holds. Choose $\epsilon>0$ and a positive integer $l_0$. Suppose that $q\geq l_0^{c/c_1\sqrt{\epsilon}}$ is an even power of $2$. Let $N_0=q^{c_0l_0/c_1\sqrt{\epsilon}}$. Then for all $N\geq N_0$, there is a  deterministic polynomial-time construction of an $\mathbb{F}_q$-linear code $C:\mathbb{F}_q^{(1-c_1\sqrt{\epsilon})N}\rightarrow \mathbb{F}_q^{N}$ of rate $1-c_1\sqrt{\epsilon}$ which is $(\epsilon,l_0,L_0)$-list recoverable in time $\poly(N,L_0)$, where 
$L_0=\exp(\exp(\exp(O_{\epsilon, l_0, q}(\log ^* N))))$ .
\end{theorem}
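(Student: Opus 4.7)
The plan is to build the code in two phases: first produce a high-rate $(\epsilon,l_0,L^{\star})$-list recoverable base code with polynomially large list, then drive the list size down by recursion. For the base I would use folded Reed--Solomon codes (Guruswami--Rudra) or multiplicity codes over $\mathbb{F}_q$ with $q=\poly(l_0,1/\epsilon)$; these are $\mathbb{F}_q$-linear, have rate $1-c_1\sqrt{\epsilon}$, and admit a polynomial-time $(\epsilon,l_0,L^{\star})$-list recoverer with $L^{\star}=N^{O(1/\epsilon)}$. The stated rate is already achieved at this step, so the remaining task is list compression.

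Next I would apply recursive code concatenation, which is the standard ingredient for compressing the list size in modern high-rate constructions. At level $i$, take an outer code of length $\approx\sqrt{N}$ that is inductively $(\epsilon',l_0,L^{(i-1)})$-list recoverable, then encode each outer symbol with an inner list-recoverable code of length $\approx\sqrt{N}$, itself obtained at level $i-1$. List recovery proceeds blockwise: run the inner recoverer to produce a candidate set of size at most $L^{(i-1)}$ for each outer coordinate, then feed the sets into the outer recoverer to obtain a final list of size $L^{(i)}$. The recursion bottoms out after $\Theta(\log^{*}N)$ levels because each level takes a square root of the length, and each composition exponentiates the list size at most a constant number of times, which produces the triply-exponential-in-$\log^{*}N$ bound $L_0$ claimed in the theorem.

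The main obstacle is controlling rate loss across the $\Theta(\log^{*}N)$ recursion levels, since naive concatenation multiplies rates by $(1-c_1\sqrt{\epsilon_i})$ per level and would destroy the target rate. To avoid this I would invoke Alon--Edmonds--Luby distance amplification via a spectral expander, which boosts the list-recovery radius of a high-rate code essentially for free in rate; the per-level error budgets $\epsilon_i$ can then be distributed geometrically so the composed code still has rate $1-c_1\sqrt{\epsilon}$. The hypotheses $q\ge l_0^{c/c_1\sqrt{\epsilon}}$ and $N\ge N_0$ exactly guarantee that the alphabet is large enough for a folded-Reed--Solomon inner code at every level and that the length is large enough for the recursion to terminate. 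Finally, $\mathbb{F}_q$-linearity is inherited at every step: folded Reed--Solomon and multiplicity codes are $\mathbb{F}_q$-linear, AEL amplification uses an $\mathbb{F}_q$-linear gadget, and concatenation of $\mathbb{F}_q$-linear outer and inner codes is $\mathbb{F}_q$-linear.
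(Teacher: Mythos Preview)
The paper does not prove this statement at all: it is quoted verbatim as Theorem~A.1 of \cite{hemenway2019local} and used as a black box for the outer code, so there is no proof in the paper to compare against. Your sketch is therefore doing strictly more than the paper does. For what it is worth, the outline you give --- a high-rate list-recoverable base (folded Reed--Solomon / multiplicity codes), recursive length-halving concatenation to depth $\Theta(\log^{*}N)$, and Alon--Edmonds--Luby expander-based distance amplification to control the per-level rate loss --- is indeed the architecture of the Hemenway--Ron-Zewi--Wootters construction, and the triply-exponential list size arises exactly from exponentiating the list a bounded number of times per level over $\log^{*}N$ levels. So as a sketch of the \emph{cited} proof your proposal is on target, but relative to the present paper the correct answer is simply ``cited, not proved here.''
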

 Take  $\epsilon=\epsilon(\gamma)=2\gamma$, $l_0=l_0(\gamma)=\ceil{\frac{L(\gamma)}{\gamma^3}}$. Take $q=q(\gamma)$ to be the smallest even power of 2 that is at least  $l_0^{c/c_1\sqrt{\epsilon}}$. Take $N=N(\gamma,n)=a(\gamma,n)n/\log q(\gamma)$, where $a(\gamma,n)$ is a multiple of $\log q(\gamma)$. (Note that we can choose $n$ such that $a(\gamma,n)$ is a multiple of $\log q(\gamma)$ since $a(\gamma,i+1)-a(\gamma,i)\leq 1$ for large enough $i$'s.)
By   Theorem~\ref{thm:a7}, we get a $\mathbb{F}_2$-linear $(2\gamma,l_0(\gamma),L_0(\gamma,n))$-list recoverable code of length $N$ with alphabet size $q(\gamma)$ where $L_0=L_0(\gamma,n)=\exp(\exp(\exp(O_{\gamma}(\log ^* N))))$.
Fold this code to an $\mathbb{F}_2$-linear $(2\gamma,l_0(\gamma),L_0(\gamma,n))$-list recoverable code of length $n$ with alphabet size $2^{a}$. This is our outer code.

\textbf{Encoding}
For a message $m$, we first encode it with the outer code and get $(c_1',c_2',\dotsc,c_{n}')$, where $c_i'\in \mathbb{F}_{2^a}$. Then, we view each $c_i'$ as a vector $\bc_i'\in \mathbb{F}_2^a$ and encode it with $\bS_i$. Thus, the final codeword is $(\bc_1,\bc_2,\dotsc,\bc_n)=(\bc_1'\bS_1,\bc_2'\bS_2,\dotsc,\bc_n'\bS_n)\in \mathbb{F}_2^{bn}$.

\textbf{Decoding}
We now explain the high level idea of the decoder.
 Suppose $\by$ is obtained from $\bx=(\bc_1,\bc_2,\dotsc,\bc_n)$ through at most $\gamma^2bn $ edits. Then there are at most $\gamma n$ $\bc_i$'s that suffer from more than $\gamma b$ edits.  So most of $\bc_i$'s suffer from at most $\gamma b$ edits. Define windows $w_i:=[1+\gamma b i,b+\gamma b i]$ for $i=0,1,\dotsc, n/\gamma$. If $\bc_i$ suffers from at most $\gamma b$  edits, then we can show that $d_e(\bc_i,\by_{w_j})\leq 4\gamma b$ for some $\by_{w_j}$. Since $\bS_1, \bS_2,\dotsc,\bS_n$ is $(4\gamma,l,L)$-sync, we can put very few items into very few boxes to ensure that $\bc_i'$ is in the $i$-th box. Since the outer code is list recoverable, we can output a list that contains $\bx$.
 Note that a similar decoding window method was used in \cite{Guruswami17high}.
\begin{algorithm}
\caption{Decoding}\label{alg:decoding}
\KwData{$\by$, which is a corrupted codeword of $\bx$, satisfying $d_e(\by,\bx)\leq \gamma^2 b n$}
\KwResult{a list $L'$ of size $\leq L_0$ containing $\bx$}
\For{$i=0$ to $n/\gamma$}{
    \For{$j=1$ to $n$}{
        \For{all non-zero $\bz\in\{0,1\}^a$}{
            \If{$d_e(\bz\bS_j,\by_{w_i})\leq 4\gamma b$}{
                put $\bz$ into $\text{box}[j]$                   }
        }
        
    }
}
add $0$ to all the boxes.

\If{any $\text{box}[j]$ has more than $l_0$ elements}{$\text{box}[j]\gets \emptyset$}

run the list recovery algorithm of the outer codes using all the boxes, and get a list $L'$.

\Return{$L'$} 
\end{algorithm}

The decoder has two stages. The first stage is step 1 to 10, which is the  inner codes list decoding  stage. The second stage is step 11 to 15, which is the outer code list recovery  stage.

\begin{theorem}
  For $0<\gamma<\frac{1}{8}$,  the decoder corrects a fraction of $\gamma^2$ edits. That is, if $d_e(\by,\bx)\leq \gamma^2bn$, then when input $\by$, the decoder outputs a list that contains $\bx$.
\end{theorem}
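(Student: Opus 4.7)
The plan is to decouple the analysis into two stages that mirror the decoder: an inner-code phase showing that most boxes contain the correct outer symbol, and an outer-code phase that invokes list recoverability.

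First, I would classify blocks. Call block $\bc_i$ \emph{good} if the substring of $\by$ to which $\bc_i$ maps under some optimal edit alignment suffers at most $\gamma b$ edits, and \emph{bad} otherwise. Because the total edit budget is $\gamma^2 bn$ and each bad block absorbs more than $\gamma b$, there are at most $\gamma n$ bad blocks. Next I would prove an alignment claim: for each good block, let $\by_{[p_i,q_i]}$ denote its image in $\by$, so $d_e(\by_{[p_i,q_i]},\bc_i)\leq \gamma b$ and $\abs{(q_i-p_i+1)-b}\leq \gamma b$. Since the windows $w_j$ all have length $b$ and are spaced $\gamma b$ apart, there is some $j^{*}$ with $\abs{p_i-(1+\gamma b j^{*})}\leq \gamma b$, so the right endpoint of $w_{j^{*}}$ differs from $q_i$ by at most $2\gamma b$. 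Trimming or padding the at most $3\gamma b$ mismatched symbols at the two ends converts $\by_{w_{j^{*}}}$ into $\by_{[p_i,q_i]}$ using $\leq 3\gamma b$ edits, so the triangle inequality gives $d_e(\by_{w_{j^{*}}},\bc_i)\leq 4\gamma b=\delta b$. Consequently the inner loop of \Cref{alg:decoding} places $\bc_i'$ into $\text{box}[i]$ for every good block.

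Next, I would use the $(\delta,l,L)$-sync property to control the box populations. Fix a window and set $\bv=\by_{w_j}$. The $(\delta,l)$-sync condition guarantees that at most $l$ indices $i$ admit a non-zero $\bz\in\mathbb{F}_2^a$ with $\bz\bS_i\in B_b(\bv,\delta b)$, and the $(\delta,L)$-list decodability of each $\bS_i$ caps the number of such $\bz$'s per index by $L$. Summing over the $n/\gamma+1$ windows yields at most $(n/\gamma+1)\,lL$ non-zero entries across all boxes combined, so by averaging the number of boxes that exceed $l_0$ (and are therefore cleared) is at most $(n/\gamma+1)\,lL/l_0$. Plugging in $l\leq 1/\gamma$ and $l_0\geq L/\gamma^3$ bounds this count by $(1+o(1))\gamma n$.

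Finally, I would combine the two sources of loss. An index $i$ fails to contribute $c_i'$ to the outer recovery only if $\bc_i$ is bad or if $\text{box}[i]$ was cleared, so the total number of failing indices is at most $\gamma n+(1+o(1))\gamma n\leq 2\gamma n$. After $0$ is appended to every box, the surviving boxes all have at most $l_0$ elements, so the $(2\gamma,l_0,L_0)$-list recoverability of the outer code returns a list of size at most $L_0$ containing $(c_1',\ldots,c_n')$, from which $\bx$ is recovered by the linearity of the concatenation. The principal obstacle is the alignment step: carefully tracking the position of each good block's image under adversarial edits and verifying that the window spacing $\gamma b$ is fine enough to force the $4\gamma b$ distance bound uniformly. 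A secondary subtlety is the arithmetic balancing of the two loss terms, which is precisely what forces the choice $l_0=\lceil L/\gamma^3\rceil$ and matches the outer code's recovery parameter $\epsilon=2\gamma$.
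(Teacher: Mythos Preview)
Your proposal is correct and follows essentially the same approach as the paper's proof: the good/bad block dichotomy with threshold $\gamma b$, the window-alignment argument yielding the $4\gamma b$ edit bound, the use of the $(\delta,l,L)$-sync property to bound total box entries and hence (via averaging) the number of cleared boxes, and the final invocation of $(2\gamma,l_0,L_0)$-list recoverability are all exactly as in the paper. One small omission: your alignment claim places $\bc_i'$ into $\text{box}[i]$ via the inner loop, but that loop only inserts non-zero $\bz$; when $\bc_i'=0$ it is step~10 (appending $0$ to every box) that supplies the correct symbol, a case the paper handles explicitly and you should too.
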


\begin{proof}
    We divide $\bx$ into blocks of length $b$, in total $n$ blocks. We first prove that if block $j$, which is $\bc_j'\bS_j$, suffers from at most $\gamma b$ edits, then $\bc_j'$ is put into box$[j]$ in the first stage.  
    If $\bc_j'=0$, then $\bc_j'$ is put into box$[j]$ by step 10. If $\bc_j'\neq 0$, and block $j$
     becomes a substring in $\by$, say  $\by_{[\alpha,\beta)}$. Then $b-\gamma b\leq \beta-\alpha\leq b+\gamma b$.
    So
    there exists $\alpha'\in[bn]$ such that $d_e(\by_{[\alpha,\beta)},\by_{[\alpha',\alpha'+b)})\leq \gamma b$. Since the step length of windows is $\gamma b$, there exists a window $w_i$ such that $d_e(\by_{[\alpha',\alpha'+b)},\by_{w_i})\leq 2\gamma b$. So  $d_e(\by_{[\alpha,\beta)},\by_{w_i})\leq d_e(\by_{[\alpha,\beta)},\by_{[\alpha',\alpha'+b)})+ d_e(\by_{[\alpha',\alpha'+b)},\by_{w_i})\leq \gamma b+2\gamma b \leq 3\gamma b.$ Thus, $d_e(\bc_j'\bS_j,\by_{w_i})\leq 4\gamma b$ and $\bc_j'$ is put into box$[j]$ in the first stage. 

    Then we analyze how many boxes become empty in the second stage. Note that because the inner codes are $(4\gamma,l,L)$-sync, for each window $w_i$, there are at most $l$ $j$'s that something is put into  box$[j]$ in step 1 to 9. And for each such $j$, there are at most $L$ vectors put into box$[j]$. Therefore the total number of vectors in all the boxes at the end of step 9 is at most $lLn/\gamma$. Because we add $0$ to all the boxes in step $10$, there are at most $lL n/\gamma  l_0$ boxes that become empty in step $11$.

    If a block $j$ suffers from at most $\gamma b$ edits and box $j$ does not become empty in the second stage, then this block contributes to one agreement in the outer code list recovery stage. Since the total number of edits is $\gamma^2 bn$, there are at least $(1-\gamma)n$ blocks suffer from at most $\gamma b$ edits. Thus, the total number of  agreement is at least 
    $
    (1-\gamma)n-lL n/\gamma  l_0$. 
    Note that $lL n/\gamma  l_0\leq l\gamma^2 n\leq \gamma n $, 
    where the first inequality comes from the fact that $l_0=\ceil{\frac{L}{\gamma^3}}$, and the second inequality is because $l=\ceil{\frac{1}{\gamma}}-1\leq \frac{1}{\gamma}$. So, the total number of  agreement is at least $1-2\gamma n$. Since our outer code is $(2\gamma,l_0,L_0(\gamma,n))$-list recoverable, the decoding algorithm is correct.
\end{proof}

Note that the rate of our outer code is $1-c_1\sqrt{2\gamma}$, and the rate of our inner code is $\frac{a}{b}\geq R-1/b=R-o(1)$, where $R=1-\frac{2}{l+1}-5H(\delta)\geq 1-2\gamma-5H(4\gamma)$. Therefore, the rate of our code is at least $(1-c_1\sqrt{2\gamma})(1-2\gamma-5H(4\gamma))=1-O(\sqrt{\gamma})$. The encoding algorithm and decoding algorithm runs in polynomial time. The list size is $L_0=\exp(\exp(\exp(O_{\gamma}(\log ^* n))))$.
In summary, we have the following theorem.
\begin{theorem}
For $0<\eta<\frac{1}{64}$,    there exists a deterministic polynomial time construction of  a linear list decodable code of length $n$ correcting $\eta$ fraction of edits with rate $1-O(\eta^{1/4})$ with list size
$\exp(\exp(\exp(O_{\eta}(\log ^* n))))$. The encoding and decoding time is polynomial in $n$.
\end{theorem}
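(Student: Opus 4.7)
The plan is to obtain this statement as a direct corollary of the preceding construction together with the rate and list-size analysis already carried out in the section, instantiated with the right reparameterization. Specifically, I would set $\gamma = \sqrt{\eta}$, which converts the ``$\gamma^2$ fraction of edits'' guarantee of the previous theorem into an $\eta$ fraction guarantee and turns the range $0 < \gamma < 1/8$ into $0 < \eta < 1/64$.

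First I would invoke the previous construction with parameter $\gamma = \sqrt{\eta}$. By the earlier theorem, the decoder in \Cref{alg:decoding} then corrects a $\gamma^2 = \eta$ fraction of edits, which establishes the error-correction claim. Encoding is performed in the order described in the Code Construction section: given a message, apply the linear outer code from \Cref{thm:a7} to obtain $(\bc_1',\dots,\bc_n')$, then apply the inner maps $\bc_i \mapsto \bc_i'\bS_i$ where $\{\bS_i\}$ is the $(4\gamma,l,L)$-sync sequence guaranteed by \Cref{cor:syncconstuction}. Both operations are $\mathbb{F}_2$-linear, so the overall code is linear, and both are polynomial-time constructible (the sync sequence by \Cref{cor:syncconstuction}, the outer code by \Cref{thm:a7}).

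Next I would read off the rate. The analysis at the end of the Code Construction section gives that the overall rate is at least $(1 - c_1\sqrt{2\gamma})(1 - 2\gamma - 5H(4\gamma)) = 1 - O(\sqrt{\gamma})$. Substituting $\gamma = \sqrt{\eta}$ yields rate $1 - O(\eta^{1/4})$, as required. For the list size, \Cref{thm:a7} provides list size $L_0(\gamma,n) = \exp(\exp(\exp(O_\gamma(\log^* N))))$, and since $N = \Theta_\gamma(n)$ and $\log^* N = \log^* n + O(1)$, this gives $\exp(\exp(\exp(O_\eta(\log^* n))))$ after absorbing constants depending on $\gamma = \sqrt{\eta}$ into those depending on $\eta$. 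Finally, polynomial-time encoding follows from linearity of both stages, and polynomial-time decoding follows from step-by-step inspection of \Cref{alg:decoding}: the inner-decoding stage enumerates $O(n/\gamma)$ windows, $n$ indices, and $2^a = \poly(n)$ candidates $\bz$, each requiring a $\poly(n)$-time edit-distance check, while the outer list-recovery stage runs in $\poly(N,L_0)$ time by \Cref{thm:a7}.

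Since the bulk of the technical work is already done, there is no genuine obstacle; the only nontrivial bookkeeping is making sure the $\gamma$-dependent constants hidden inside the earlier theorem's $O(\cdot)$ and $O_\gamma(\cdot)$ notations are correctly rewritten as $\eta$-dependent constants under the substitution $\gamma = \sqrt{\eta}$, and that the constraint $a(\gamma,n)$ being a multiple of $\log q(\gamma)$ (needed to fold the outer code over $\mathbb{F}_{2^a}$) can still be arranged by choosing $n$ appropriately, exactly as noted in the construction.
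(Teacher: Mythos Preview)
Your proposal is correct and matches the paper's approach: the final theorem is stated as a summary of the preceding construction and analysis, and the substitution $\gamma=\sqrt{\eta}$ you describe is exactly the reparameterization that turns the $\gamma^2$-fraction guarantee, the $1-O(\sqrt{\gamma})$ rate, and the range $0<\gamma<1/8$ into the stated $\eta$-fraction, $1-O(\eta^{1/4})$ rate, and $0<\eta<1/64$. The paper does not spell out this substitution explicitly, so your write-up is if anything more detailed than the original.
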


\section*{Acknowledgment}
This work was supported in part by NSF grants CIF-2312871, CIF-2312873, CIF-2144974, and CCF-2212437.




%
\bibliographystyle{IEEEtran}
\bibliography{bibliofile}



%








\end{document}